\documentclass{article}
\usepackage{amsmath,amssymb,amsthm}
\usepackage{graphicx} % Required for inserting images
\usepackage[utf8]{inputenc}
\usepackage{amsmath, amssymb, amsthm}
\usepackage{authblk}
\usepackage{tikz}
\usetikzlibrary{positioning, shapes.geometric, arrows.meta, calc}
\usepackage{hyperref}
\usepackage{enumitem}
\usepackage{geometry}
\usepackage{microtype}
\newtheorem{theorem}{Theorem}[section]

\newtheorem{conjecture}[theorem]{Conjecture}

\newtheorem{proposition}[theorem]{Proposition}
\newtheorem{corollary}[theorem]{Corollary}
\newcommand{\R}{\mathbb R}
\newcommand{\Z}{\mathbb Z}

\DeclareMathOperator{\cone}{cone}

\title{Equitable Subpartitionability Does Not Characterize Box-TDI Polyhedra}

\author[1]{Emiliano Lancini%
  \thanks{\href{mailto:emiliano.lancini@lamsade.dauphine.fr}
  {\nolinkurl{emiliano.lancini@lamsade.dauphine.fr}}---Corresponding Author}}

\author[2]{Francesco Pisanu%
  \thanks{\href{mailto:francesco.pisanu@unive.it}
  {\nolinkurl{francesco.pisanu@unive.it}}}}
  \affil[1]{LAMSADE, CNRS UMR 7243,
Université Paris Dauphine--PSL, Paris, France.}
  \affil[2]{Venice School of Management, Ca’ Foscari University of Venice, Venice, Italy.}

\date{ }
\begin{document}
\maketitle
\vspace{-2em}
\begin{abstract}
Equitable subpartitionability (ESP) is a sufficient condition for an integral system to be box-totally dual integral. We show that the converse fails even when alternative descriptions of the same polyhedron are allowed. We provide a geometrical characterization of polyhedra that can be described by ESP systems in terms of normal cones.  We construct a four-dimensional integer polytope described by a totally equimodular matrix that admits no integral ESP description. This answers negatively a previously posed question by the same authors.
\end{abstract}
\section{Introduction}
% blabla about interaction between systems and polyhedra
% Introduction of Equitable subpartitionability : general concept, time of introduction, what it generalizes.
% Introduction of box-TDI systems, polyhedra, TE polyhedra.
% it is open the question if all box-TDI polyhedra admit a ESP description (as raised by Lancini Pisanu).
% Objective of the paper: answer this question (negatively) and provide a sufficient condition that, toghether with box-TDIness, restore ESP.
%some literature review.
In this work we compare a system and a polyhedral property both of which generalize total unimodularity and yield strong integrality. These two properties are equitable subpartitionability and box-total dual integrality.

A rational system $Ax\le b$, $x\ge0$ is \emph{totally dual integral} (TDI) if the dual of $\max\{c^\top x:Ax\le b,\ x\ge0\}$ admits an integer optimal solution for every integer vector $c$ for which the optimum is finite. When this property is preserved under the addition of arbitrary rational bounds the  variables, we say that the system is \emph{box-totally dual integral}. A polyhedron that can be described by a box-TDI system is called a \emph{box-TDI polyhedron}. We refer to Schrijver~\cite{Schrijver:ILP, schrijver:2003} and to Lancini and Pisanu~\cite{lancini:2026} for a general introduction and an overview of these properties.

A sufficient condition for box-TDIness is \emph{equitable subpartitionability}, abbreviated ESP. This property was introduced by Ding and Zang~\cite{Ding:2002} in 2002 in their study of packing and covering cycles in graphs. It was subsequently used by Chen, Chen, and Zang~\cite{chen2010unified} for box-Mengerian hypergraphs, and extended to general systems of linear inequalities by Ding, Tan, and Zang~\cite{ding2017matching}. Roughly speaking, a system is ESP if every collection of its rows admits two collections satisfying prescribed inequalities on their row sums and their total right-hand side. The two collections need not be subcollections of the original one. This condition generalizes the equitable partition property of totally unimodular matrices. Every ESP system with an integer matrix is box-TDI.

Since equitable subpartitionability implies box-TDIness, one may ask whether every box-TDI polyhedron admits an ESP description, possibly different from its initial description. This question was raised by Lancini and Pisanu~\cite[Question 4.10]{lancini:2026}. 

In this work, we answer this question negatively. We provide a four-dimensional integer polytope which admits a TE description but cannot be described by any ESP system with integer coefficients, and we give a new sufficient condition for a box-TDI polyhedron to admit a ESP description.

%Our approach uses the normal cones at the vertices. We prove that, for full-dimensional rational polytopes contained in the strictly positive orthant, the existence of an integer ESP description is equivalent to the following condition: every integer vector of each vertex normal cone is the sum of two integer vectors of that cone which differ by at most one in each coordinate. We show that this condition can be tested on the subset sums of a Hilbert basis. We then give a subdivision operation which preserves its failure, obtaining integer TE parallelotopes with no integer ESP description in every dimension at least four.

\paragraph{Outline.}
In Section~\ref{sec:preliminaries}, we introduce the definitions and results used throughout the paper. In
Section~\ref{sec:cones}, we characterize the existence of an integer ESP description using normal cones. Section~\ref{sec:construction} presents a box-TDI polyhedron that does not admit an ESP description, together with a general rule for building one.

\section{Preliminaries}\label{sec:preliminaries}
Throughout the paper, all polyhedra are rational and contained in $\R_{>0}$. We denote by $[n]$ the set $\{1, . . . , n\}$.

Consider a linear system $Ax\le b, x\ge0$, where $A$ is an integral matrix and $b$ is rational. Let $R$ be the set of indices of the rows of $A$, and let $a_r$ denote row $r$. A \emph{collection} of elements of $R$ is a subset of elements of $R$, where each element can appear multiple times. For a collection $\Lambda$ of rows of the system $Ax\le b$, we define the vector $d_\Lambda=\sum_{r\in\Lambda}a_r$, and the scalar $b(\Lambda)=\sum_{r\in\Lambda}b_r$.

An \emph{equitable subpartition} of $\Lambda$ is a pair of collections $\Lambda_1,\Lambda_2$ satisfying three conditions:
\begin{align}
b(\Lambda_1)+b(\Lambda_2)&\le b(\Lambda), \tag{E1}\label{eq:E1}\\
d_{\Lambda_1}+d_{\Lambda_2}&\ge d_\Lambda, \tag{E2}\label{eq:E2}\\
d_{\Lambda_1},d_{\Lambda_2}&\ge
\left\lfloor\frac{d_\Lambda}{2}\right\rfloor,
\tag{E3}\label{eq:E3}
\end{align}
Where \eqref{eq:E2} and \eqref{eq:E3} are componentwise.
Remark that the collections $\Lambda_1$ and $\Lambda_2$ need not be subcollections of $\Lambda$. 
The system is \emph{equitably subpartitionable} (\emph{ESP}) if every collection admits an equitable subpartition. 
We say that a polyhedron admits an \emph{ESP description} if there exists an ESP system that describes it.
%Given a collection $\Lambda$, we denote by $\chi^Lambda$ its incidence vector considering the multiplicity.
Remark that, under our assumptions, a polyhedron admits an ESP description only if it is contained in the nonnegative orthant.

The ESP property was used in several papers to prove strong integrality conditions (see for instance \cite{chen2010unified, Ding:2002}). In particular, we have that being ESP is a sufficient condition for box-TDIness.
\begin{theorem}[Ding et al.~\cite{ding2017matching}]\label{thm:ESPimpliesboxTDI}
Every ESP system $Ax\le b, x\ge0$ with $A$ integral is box-TDI.
\end{theorem}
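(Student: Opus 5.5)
We would prove Theorem~\ref{thm:ESPimpliesboxTDI} through the standard splitting argument for box-TDIness. Fix rational bounds $l\le x\le u$ and an integer $c$ for which $\max\{c^\top x: Ax\le b,\ x\ge 0,\ l\le x\le u\}$ is finite. It is classical (Cook; see Schrijver~\cite{Schrijver:ILP}) that box-TDIness is equivalent to the following. For every rational $w$, the dual
\[
\min\{b^\top y : y\ge 0,\ y^\top A\ge w\}
\]
has an optimal solution $y^*$ with $\|w-\lfloor y^{*\top}A\rfloor\|$ suitably controlled. Concretely, it suffices that for every integer $w$ with finite optimum there is an optimal $y$ satisfying
\[
\lfloor w\rfloor\le y^\top A\le\lceil w\rceil\ \text{in a rounded sense, on the support where } y^\top A> w .
\]
Rather than rely on that criterion, we would argue directly. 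We show by induction that for every integer $w$ and every positive integer $k$, among optimal dual solutions there is one that is $1/2^k$-integral and whose row combination has the form needed for the box constraints to absorb the remaining slack. We then take a limiting or finiteness argument to extract an integral optimal dual solution.

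The key steps are as follows.

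\textbf{Step 1: reduce to the half-integral case.} Start from a rational optimal dual solution $y$ and scale by a power of $2$ so that $Ny$ is integral. Then $Ny$ is encoded by a collection $\Lambda$ of rows, with $d_\Lambda\ge Nw$ on the relevant coordinates and $b(\Lambda)=N\cdot\mathrm{OPT}$.

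\textbf{Step 2: halve the collection.} Apply ESP to $\Lambda$ to obtain $\Lambda_1,\Lambda_2$. By \eqref{eq:E1}, $b(\Lambda_1)+b(\Lambda_2)\le b(\Lambda)$. By \eqref{eq:E2}, $\chi^{\Lambda_1}+\chi^{\Lambda_2}$ is again a feasible combination covering $d_\Lambda$, so it is also optimal. By \eqref{eq:E3}, each $d_{\Lambda_i}\ge\lfloor d_\Lambda/2\rfloor\ge (N/2)w$ wherever $w$ is integral and $N$ is even. Since $b(\Lambda_1)+b(\Lambda_2)$ is at most twice the optimum for $(N/2)w$, each $\chi^{\Lambda_i}$ is an optimal dual solution for the scaled objective $(N/2)w$.

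\textbf{Step 3: iterate.} Iterating reduces $N$ to $1$ and yields an integral optimal dual solution for $\max\{w^\top x:Ax\le b,x\ge0\}$.

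The box constraints are handled by applying this with $w$ ranging over integer vectors $c-\mu+\nu$ for integral multipliers $\mu,\nu\ge0$ of the upper and lower bounds. These coordinates where $d_\Lambda$ overshoots $w$ are exactly those compensated by the bound multipliers. The floor in \eqref{eq:E3} guarantees that the overshoot stays integral after halving.

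The main obstacle is Step~2 in the presence of the box. There, $w$ is not required to satisfy $y^\top A\ge w$ but only $y^\top A+\nu-\mu=c$ with $\nu,\mu\ge0$ attached to $l,u$. The halving must therefore be done jointly for the row part and the bound part. We would try to define the bound multipliers for $\Lambda_i$ as $\mu_i=(d_{\Lambda_i}-c/\ldots)^+$-type corrections and check that their contribution $u^\top\mu_i-l^\top\nu_i$ sums to at most the original. This is where the exact form $\lfloor d_\Lambda/2\rfloor$, rather than $d_\Lambda/2$, is essential, together with integrality of $A$. We would carry out this coordinatewise bookkeeping by splitting coordinates according to the parity of $(d_\Lambda)_j$. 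If this bookkeeping becomes unwieldy, we would instead use the paper's convention that polyhedra lie in the positive orthant together with Cook's characterization (box-TDI iff TDI for every perturbation $w$ via the $\lfloor\cdot\rfloor$ criterion). That reduces the claim to the plain TDI statement of Steps~1--3.
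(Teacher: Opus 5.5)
Your proposal has two genuine gaps, and the second one is the real content of the theorem. The paper only cites Ding, Tan and Zang for this result, so your argument has to stand on its own.

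\textbf{First gap: dyadic scaling.} Step~1 assumes that some optimal dual $y$ has $Ny$ integral with $N$ a power of $2$. A rational optimal dual can have odd denominators, and a priori the optimal dual face could contain only such points, so repeated halving never reaches $N=1$. Your Steps~2--3 are correct once a dyadic optimum exists, but you need an argument that does not depend on it.

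A clean fix is to upgrade ESP to equitable $k$-subpartitions for every $k$. Start from $\Lambda_1=\Lambda$ and $\Lambda_2=\dots=\Lambda_k=\emptyset$. Keep the invariants $\sum_i b(\Lambda_i)\le b(\Lambda)$ and $\sum_i d_{\Lambda_i}\ge d_\Lambda$. If $(d_{\Lambda_i})_t<q_t:=\lfloor (d_\Lambda)_t/k\rfloor$ for some $i,t$, then $(d_{\Lambda_j})_t>q_t$ for some $j$. Replace $\Lambda_i,\Lambda_j$ by an equitable subpartition of $\Lambda_i+\Lambda_j$. This strictly increases the bounded integer potential $\sum_{i,t}\min\{(d_{\Lambda_i})_t,q_t\}$, so the process terminates. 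Apply it with $k=N$ to the collection encoding $Ny$, for arbitrary $N$. You get $N$ integral dual solutions for $c$ (since $\lfloor d_\Lambda/N\rfloor\ge c$) whose values sum to at most $N\cdot\mathrm{OPT}$, so each one is optimal.

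\textbf{Second gap: the box constraints are not handled.}
\begin{itemize}
\item Step~2 uses $\lfloor d_\Lambda/2\rfloor\ge (N/2)w$, which requires $w$ to be integral. But $w=c-\mu^*+\nu^*$ is built from optimal bound multipliers that are only rational. Taking them integral presupposes the conclusion, and the ``coordinatewise bookkeeping'' is never specified.
\item Your fallback is not valid. TDI does not imply box-TDI.
\item Cook's criterion concerns non-integral objectives $w$, which must be rounded to an integral $\tilde w$ with $\lfloor w\rfloor\le\tilde w\le\lceil w\rceil$ while preserving optimal solutions. Steps~1--3 say nothing about such objectives.
\item The paper's positive-orthant convention plays no role here.
\end{itemize}

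\textbf{The missing idea.} Add the bounds as rows $e_j^\top x\le u_j$ and $-e_j^\top x\le -l_j$, and prove that the enlarged (still integral) system is ESP. Box-TDI of the original system then reduces to plain TDI of the enlarged one. Given a collection of the enlarged system:
\begin{itemize}
\item First cancel opposite pairs $e_j,-e_j$; this only lowers the right-hand side, since $l_j\le u_j$.
\item Split the $A$-part by ESP into $\Lambda_1,\Lambda_2$.
\item Distribute all $m_j$ copies of $e_j$, or all $n_j$ copies of $-e_j$, between the two parts. Write $d=(d_\Lambda)_j$.
\item In the first case, the copies of $e_j$ suffice to lift both parts to $\lfloor (d+m_j)/2\rfloor$. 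This holds because $\lfloor (d+m_j)/2\rfloor-\lfloor d/2\rfloor\le m_j$ and $(d_{\Lambda_1}+d_{\Lambda_2})_j\ge d$.
\item In the second case, both parts start at least at $\lfloor d/2\rfloor\ge\lfloor (d-n_j)/2\rfloor$. Their total surplus over $2\lfloor (d-n_j)/2\rfloor$ is at least $n_j$, so the copies of $-e_j$ can be absorbed.
\end{itemize}
This is exactly where the floor in \eqref{eq:E3}, rather than $d_\Lambda/2$, is needed.
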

This can be translated in polyhedral terms.
\begin{corollary}
    Every polyhedron that admits a ESP description is a box-TDI polyhedron.
\end{corollary}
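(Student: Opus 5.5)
The corollary is essentially a direct translation of Theorem~\ref{thm:ESPimpliesboxTDI} into polyhedral language, so the plan is to unwind the definitions and invoke that theorem. Let $P$ be a polyhedron that admits an ESP description. By definition, there is an ESP system describing $P$. Within the standing conventions of this section, that system has the form $Ax\le b$, $x\ge 0$, with $A$ integral and $b$ rational. Hence $P=\{x: Ax\le b,\ x\ge 0\}$.

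Next I apply Theorem~\ref{thm:ESPimpliesboxTDI} to this system. It is ESP and has integral $A$, so it is box-TDI. Since $P$ is described by a box-TDI system, the definition of a box-TDI polyhedron from the introduction shows that $P$ is a box-TDI polyhedron.

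The only point needing care is that "ESP description" must be read with the paper's conventions. The ESP property was defined only for systems $Ax\le b$, $x\ge 0$ with $A$ integral, since it requires the vectors $d_\Lambda$ to be integral so that $\lfloor d_\Lambda/2\rfloor$ makes sense. Under this reading the integrality hypothesis of Theorem~\ref{thm:ESPimpliesboxTDI} holds automatically. If one wanted to allow rational $A$, the argument would no longer go through directly, and the step would need rescaling or a separate argument. I do not expect this to be needed, given how the definitions are set up.
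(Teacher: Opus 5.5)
Your proposal is correct and matches the paper, which obtains the corollary directly by applying Theorem~\ref{thm:ESPimpliesboxTDI} to an ESP description of $P$ and then using the definition of a box-TDI polyhedron. Your remark that integrality of $A$ is part of the paper's standing convention for ESP systems is exactly the reading the paper intends.
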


Another sufficient condition for a polyhedron to be box-TDI is the existence of a description that uses either totally unimodular or totally equimodular matrices.
An integer matrix is \emph{totally equimodular} (TE) if, for every subset of linearly independent rows, all nonzero maximal minors have the same absolute value; if those values are only in $\{0,\pm1\}$, the matrix is \emph{totally unimodular} (TU). 
\begin{theorem}[Chervet, Grappe, and Robert \cite{Chervet:2021}]\label{thm:TEboxTDI}
    Let $A$ be an integer matrix. Then $A$ is TE if and only if $\{x:Ax\le b\}$ is box-TDI for every rational vector $b$. 
\end{theorem}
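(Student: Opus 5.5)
The plan is to go through a polyhedral characterization of box-TDIness in terms of normal cones of faces, rather than arguing directly about dual solutions. The tool I would use is the face-wise criterion (Cook; refined by Chervet, Grappe and Robert). It says that a rational polyhedron $P$ is box-TDI if and only if, for every face $F$ of $P$, some (equivalently every) matrix whose rows form a basis of the linear span of the normal cone $N_F$ is \emph{equimodular}, i.e.\ all its nonzero maximal minors have the same absolute value. Once this criterion is available, both directions reduce to a statement about which subspaces arise as spans of normal cones of $\{x:Ax\le b\}$ as $b$ varies.

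For the direction ``$A$ TE $\Rightarrow$ box-TDI for all $b$'', fix a rational $b$ and a face $F$ of $P=\{x:Ax\le b\}$. The normal cone $N_F$ is generated by the rows $a_r$ that are tight on $F$. Hence its linear span has a basis made of linearly independent rows of $A$. By total equimodularity, the matrix formed by these rows is equimodular, and the criterion gives that $P$ is box-TDI. The one point to check is that the criterion really is basis-independent. Otherwise one must show that this particular basis, taken from rows of $A$, is an admissible choice. I would verify this using the fact that two bases of the same subspace differ by an invertible matrix, and equimodularity only depends on the ratios of maximal minors.

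For the converse, suppose $\{x:Ax\le b\}$ is box-TDI for every rational $b$. Let $B$ be any set of linearly independent rows of $A$, with $L$ the span of $B$. I want to choose $b$ so that some face has normal cone spanning exactly $L$. Pick a point $x_0$ and set $b_r=a_r x_0$ for $r\in B$. For every row $r\notin B$, set $b_r=a_r x_0+1$. Then $x_0$ lies in $P$, and the face $F=\{x\in P: a_r x=b_r,\ r\in B\}$ is nonempty. Some rows $r\notin B$ may lie in $L$; these are not tight at $x_0$. I still need a point of $F$ at which no row outside $B$ is tight, so that $N_F$ is exactly the cone generated by $B$. Such a point exists because the rows of $B$ are independent, so $F$ contains a relatively open neighbourhood of $x_0$ in the affine space $\{x:Bx=Bx_0\}$. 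Box-TDIness of $P$ and the criterion then imply that $B$, as a basis of $\operatorname{lin}N_F=L$, is equimodular. Since $B$ was arbitrary, $A$ is TE.

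The main obstacle is the first ingredient: establishing or quoting the face-wise equimodularity criterion for box-TDI polyhedra. In particular, one must show that equimodularity of one basis of $\operatorname{lin}N_F$ forces box-TDIness via Cook's result that box-TDIness of $P$ is equivalent to $P\cap\{\ell\le x\le u\}$ being integral-dual for all boxes. If I had to prove it, I would pass through the Hilbert bases of the normal cones. I would show that equimodularity of the span is exactly what makes every cone obtained by adding unit vectors $\pm e_i$ to $N_F$ have a Hilbert basis compatible with integral decompositions. This is the technical heart of the argument; the rest is the bookkeeping above.
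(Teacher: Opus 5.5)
The paper gives no proof of this statement; it is imported from Chervet, Grappe and Robert, so there is no internal argument to compare against. Your reduction is correct. It is exactly how the statement follows from the characterization proved in that same work: a polyhedron is box-TDI if and only if every face-defining matrix is equimodular. Here a face-defining matrix is a full-row-rank $M$ with $\operatorname{aff}F=\{x:Mx=d\}$, i.e.\ a basis of $\operatorname{span}N_F$.

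Both directions check out.
\begin{itemize}
\item For a nonempty face $F$ of $\{x:Ax\le b\}$, the rows in $\operatorname{eq}(F)$ generate $N_F$. So they contain a basis of $\operatorname{span}N_F$ made of linearly independent rows of $A$. Basis independence holds because replacing $M$ by $UM$ with $U$ nonsingular multiplies every maximal minor by $\det U$.
\item In the converse, take $x_0$ rational (e.g.\ $x_0=0$) so that $b$ is rational. The rows tight at $x_0$ are then exactly those of $B$, hence $\operatorname{eq}(F)=B$ and $B$ itself is a face-defining matrix of $F$.
\item The neighbourhood argument is superfluous, since $x_0$ is already the point you want. What it rests on is the unit slack of the rows outside $B$, not the independence of $B$; independence is needed only for $B$ to have full row rank.
\end{itemize}

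The caveat is that all of the substance is in the quoted criterion. Your Hilbert-basis sketch does not yet explain how equimodularity of the spans controls integral points of the cones $N_F$ themselves. If you want the argument self-contained, a short route is Cook's rounding characterization: $P$ is box-TDI if and only if every rational $c$ admits an integral $\tilde c$ with $\lfloor c\rfloor\le\tilde c\le\lceil c\rceil$ such that every $c$-optimal point is $\tilde c$-optimal. Below, $M_J$ denotes the columns of $M$ indexed by $J$.

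For sufficiency, let $F$ be the $c$-optimal face. The polytope $N_F\cap\{y:\lfloor c\rfloor\le y\le\lceil c\rceil\}$ is nonempty, since it contains $c$. It is also integral. Indeed, a vertex $w$ lies in the relative interior of a face $G$ of $N_F$, which is the normal cone of a face containing $F$. Inside $\operatorname{span}G$, $w$ is determined by its integral coordinates on a set $J$ with $M_J$ nonsingular, where $M$ is an equimodular basis of $\operatorname{span}G$. Hence $w^\top=w_J^\top M_J^{-1}M$, which is integral because $M_J^{-1}M$ has entries in $\{0,\pm1\}$ by Cramer's rule. Any vertex then serves as $\tilde c$.

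For necessity, let $M$ be any basis of $\operatorname{span}N_F$ and $J$ any column set with $M_J$ nonsingular. Apply the rounding property to $c^\top=u^\top M_J^{-1}M+Km_0^\top$, with $u$ integral, $m_0$ an integral point of $\operatorname{relint}N_F$, and $K$ large.
\begin{itemize}
\item Since $c\in\operatorname{relint}N_F$, its optimal face is $F$, so $\tilde c\in N_F$.
\item Since $c_J$ is integral, $\tilde c$ agrees with $c$ on $J$, and therefore $\tilde c=c$.
\item Thus $M_J^{-1}M$ is integral for every such $J$.
\item For two such sets $J,J'$, both $\det M_{J'}/\det M_J$ and its inverse are then integers. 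So all nonzero maximal minors of $M$ have the same absolute value.
\end{itemize}
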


Throughout the paper, a \emph{cone} is a polyhedron closed under scalar nonnegative multiplication; in particular, if the cone is pointed, that is it has a vertex, it must be pointed in $0$.
For a polyhedron $P$ and a point $v\in P$, its \emph{normal cone} at $v$
is $N_P(v):=\{z :z^\top (x-v)\le0 $ for every  $x\in P\}$.
Thus, $z\in N_P(v)$ if and only if $v$ maximizes $z^\top x$ over
$P$. For a polytope $P$, we denote its support function by $h_P(z)=\max_{x\in P}z^\top x$.

A \emph{Hilbert basis} of a rational pointed cone $C$ is a finite set of
integer vectors of $C$ which generates $C\cap\Z^n$ by nonnegative
integer combinations.

\section{Balanced Cones}\label{sec:cones}
Let $C$ be a rational cone, we say that the cone is \emph{balanced} if for every integer vector $z$ in $C$, there exist two integer vectors $z_1,z_2$ in $C$ such that:
\begin{itemize}
    \item $z_1+z_2=z$ and
    \item $|z_1-z_2|\le 1$.
\end{itemize}

To test balancedness of a cone, it suffices to consider all vectors obtained by $0,1$ combinations of elements of a Hilbert basis $H$ of the
cone. Note that those combinations can grow superpolynomially in the size of the basis, hence this observation does not yield a polynomial-time test. 

\begin{proposition}\label{prop:necessary}
Let $P$ be a polyhedron that admits an ESP description, and let $v$ be a vertex of $P$ with only strictly positive components. Then $N_P(v)$ is balanced.
\end{proposition}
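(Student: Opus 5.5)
The plan is to realize each integer vector $z\in N_P(v)$ as a row sum $d_\Lambda$ of a collection of rows tight at $v$, apply the ESP property to $\Lambda$, and then use feasibility of $v$ together with its strict positivity to force the two resulting collections to split $z$ evenly inside the cone. Throughout, let $Ax\le b,\ x\ge 0$ be an ESP description of $P$ with $A$ integral, as in our standing conventions. Since $v>0$, none of the constraints $x\ge0$ is tight at $v$. Hence $N_P(v)$ is the cone generated by the rows $a_r$ with $a_r^\top v=b_r$.

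\emph{Step 1: writing $z$ as a row sum.} Fix an integer $z\in N_P(v)$. By Theorem~\ref{thm:ESPimpliesboxTDI} the system is box-TDI, hence in particular TDI. Since $v$ maximizes $z^\top x$ over $P$, the optimum is finite. So the dual $\min\{b^\top y: y^\top A\ge z^\top,\ y\ge0\}$ has an integer optimal solution $y$. Complementary slackness with the primal optimum $v$ gives two facts:
\begin{itemize}
\item since $v>0$, every dual constraint is tight, so $y^\top A=z^\top$;
\item $y_r>0$ only for rows $r$ tight at $v$.
\end{itemize}
Let $\Lambda$ be the collection containing each row $r$ with multiplicity $y_r$. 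Then $d_\Lambda=z$, every row of $\Lambda$ is tight at $v$, and $b(\Lambda)=z^\top v$.

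\emph{Step 2: applying ESP and forcing equalities.} Take an equitable subpartition $\Lambda_1,\Lambda_2$ of $\Lambda$. Since $v$ is feasible, summing the row inequalities over each collection gives $d_{\Lambda_i}^\top v\le b(\Lambda_i)$. Combining this with \eqref{eq:E1} yields
\[
(d_{\Lambda_1}+d_{\Lambda_2})^\top v\le b(\Lambda_1)+b(\Lambda_2)\le b(\Lambda)=z^\top v .
\]
On the other hand, \eqref{eq:E2} gives $d_{\Lambda_1}+d_{\Lambda_2}-z\ge0$, and $v>0$. So $(d_{\Lambda_1}+d_{\Lambda_2}-z)^\top v\ge0$, with equality only if $d_{\Lambda_1}+d_{\Lambda_2}=z$. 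The chain above therefore forces equality throughout. Consequently $z_1:=d_{\Lambda_1}$ and $z_2:=d_{\Lambda_2}$ satisfy $z_1+z_2=z$, and $d_{\Lambda_i}^\top v=b(\Lambda_i)$ for $i=1,2$. The latter is a sum of nonnegative slacks, so every row of $\Lambda_i$ is tight at $v$. Hence $z_i$ lies in the cone generated by tight rows, which is $N_P(v)$. Moreover $z_i$ is integral because $A$ is.

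\emph{Step 3: the balance condition.} By \eqref{eq:E3}, $z_1\ge\lfloor z/2\rfloor$, so $z_2=z-z_1\le z-\lfloor z/2\rfloor=\lceil z/2\rceil$. Symmetrically, $\lfloor z/2\rfloor\le z_1\le\lceil z/2\rceil$ as well. Therefore $|z_1-z_2|\le 1$ componentwise, and $N_P(v)$ is balanced.

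The main obstacle is Step 1. An integer vector of $N_P(v)$ need not be a nonnegative integer combination of the tight rows a priori; it is TDI-ness, obtained via Theorem~\ref{thm:ESPimpliesboxTDI}, together with complementary slackness at the strictly positive vertex $v$, that supplies the integral collection $\Lambda$. Strict positivity of $v$ is used twice: to discard the constraints $x\ge0$ and get the equality $y^\top A=z^\top$, and to turn the inequality \eqref{eq:E2} into the equality $z_1+z_2=z$.
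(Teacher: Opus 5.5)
Your proof is correct and follows the paper's argument step for step: TDI-ness via Theorem~\ref{thm:ESPimpliesboxTDI} gives an integer optimal dual $y$, and strict positivity of $v$ yields both $A^\top y=z$ and the upgrade of \eqref{eq:E2} to $d_{\Lambda_1}+d_{\Lambda_2}=z$; the forced equalities make every row of $\Lambda_1,\Lambda_2$ tight, and \eqref{eq:E3} gives balancedness. The only differences are cosmetic: you obtain the first step by complementary slackness rather than the paper's chain of duality inequalities, and you spell out the floor/ceiling computation showing that \eqref{eq:E3} together with $z_1+z_2=z$ gives $|z_1-z_2|\le 1$, which the paper leaves implicit.
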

\begin{proof}
Let $Ax\le b$, $x\ge0$ be an ESP description of $P$, and let $z\in N_P(v)\cap\Z^n$. 
Since the system is TDI by Theorem~\ref{thm:ESPimpliesboxTDI} and $v$ maximizes $z^\top x$, there exists an integer optimal dual solution $y\ge0$. By the strong duality theorem, we have that $ z^\top v=b^\top y\ge(A^\top y)^\top v\ge z^\top v$, thus all members are equal. In particular, since $v$ has only nonzero components, $A^\top y=z$. Moreover, every row used with positive multiplicity in $y$ is tight at $v$.

Let $\Lambda$ be the collection in which row $r$ occurs $y_r$ times.
Then, $d_\Lambda=z$ and $b(\Lambda)=z^\top v$. Let $\Lambda_1,\Lambda_2$ be an equitable subpartition of $\Lambda$, we have that:
\begin{enumerate}
    \item $b(\Lambda) \ge b(\Lambda_1)+b(\Lambda_2)$ for \eqref{eq:E1},
    \item $b(\Lambda_i) \ge d_{\Lambda_i}^\top v$ for $i=1,2$, because $v$ is feasible,
    \item $(d_{\Lambda_1}+d_{\Lambda_2})^\top v\ge d_{\Lambda}^\top v$ for~\eqref{eq:E2}.
\end{enumerate}
Combining these facts together, with the fact that $v$ has only positive components, we have: $z=d_{\Lambda_1}+d_{\Lambda_2}$.
Moreover, since equality holds in point~2, for each $i=1,2$,
$0=b(\Lambda_i)-d_{\Lambda_i}^\top v=\sum_{r\in\Lambda_i}(b_r-a_r^\top v)$.
Therefore, every row appearing in $\Lambda_1$ or $\Lambda_2$ is tight at $v$. Thus, we have that $d_{\Lambda_1}$ and $d_{\Lambda_2}$ are integer vectors of $ N_P(v)$, and \eqref{eq:E3} proves the result.
\end{proof}

Note that nonnegativity of $v$ alone is not sufficient for this argument. If a component of $v$ is zero, then $d_{\Lambda_1}+d_{\Lambda_2}\ge z$ and $(d_{\Lambda_1}+d_{\Lambda_2}-z)^\top v=0$ need not imply equality in that component.

We now prove that if the normal cone of a polyhedron is balanced at every vertex, then we can build an ESP description.

\begin{theorem}\label{thm:sufficient}
Let $P$ be a full-dimensional polyhedron. If every vertex normal
cone of $P$ is balanced, then $P$ admits a finite ESP description. 
\end{theorem}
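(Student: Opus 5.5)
The plan is to take as description of $P$ the union of Hilbert bases of all vertex normal cones, and to build equitable subpartitions by splitting the row sum of a collection inside the normal cone of a vertex where that sum is maximized. Since $P\subseteq\R_{>0}$ is contained in the positive orthant, it is pointed, so it has finitely many vertices $v$. Each $N_P(v)$ is a rational pointed cone; it is full-dimensional because $v$ is a vertex, and pointed because $P$ is full-dimensional. For each vertex $v$ fix a Hilbert basis $H_v$ of $N_P(v)$, and let $H=\bigcup_v H_v$, a finite set of integer vectors. Consider the system
\[
z^\top x\le h_P(z)\quad (z\in H),\qquad x\ge 0 .
\]
The right-hand sides are finite and rational: for $z\in H_v$ we have $h_P(z)=z^\top v$, and $v$ is rational.

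\textbf{Step 1: the system describes $P$.} Every inequality is valid for $P$: the rows by definition of $h_P$, and $x\ge 0$ because $P$ lies in the positive orthant. Conversely, since $P$ is full-dimensional it is described by its facet inequalities together with $x\ge0$. Take a facet $F$ of $P$, let $a$ be its primitive integer outer normal, and let $v$ be a vertex of $F$. Then $a$ spans an extreme ray of $N_P(v)$. A primitive integer vector on an extreme ray of a pointed cone is not a nonnegative integer combination of other integer vectors of the cone, so it belongs to every Hilbert basis; hence $a\in H_v$. Its inequality $a^\top x\le h_P(a)$ is exactly the facet inequality of $F$, so the system describes $P$.

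\textbf{Step 2: ESP.} Let $\Lambda$ be a collection of rows and set $d=d_\Lambda$, an integer vector. Each row $a_r$ lies in some vertex normal cone, hence in the normal cone of $P$, i.e.\ $\max_{x\in P} a_r^\top x$ is finite. This set is a convex cone, so $d$ lies in it too. Since $P$ is pointed, $\max\{d^\top x:x\in P\}$ is finite and attained at a vertex $v$, so $d\in N_P(v)\cap\Z^n$. Subadditivity of the support function gives
\[
h_P(d)\le\sum_{r\in\Lambda}h_P(a_r)=b(\Lambda).
\]
Because $N_P(v)$ is balanced, we can write $d=z_1+z_2$ with $z_1,z_2\in N_P(v)\cap\Z^n$ and $|z_1-z_2|\le 1$. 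Componentwise this gives $z_i\ge\lfloor d/2\rfloor$: where $d_j$ is even the two parts are equal, and where $d_j$ is odd they differ by one. Expand each $z_i$ as a nonnegative integer combination of $H_v$, and let $\Lambda_i$ be the corresponding collection of rows. Then $d_{\Lambda_1}+d_{\Lambda_2}=d$, which gives \eqref{eq:E2} with equality, and $d_{\Lambda_i}=z_i\ge\lfloor d/2\rfloor$, which gives \eqref{eq:E3}. Every row of $\Lambda_i$ lies in $N_P(v)$, so $h_P(a_r)=a_r^\top v$ for each of them. Therefore
\[
b(\Lambda_1)+b(\Lambda_2)=z_1^\top v+z_2^\top v=d^\top v=h_P(d)\le b(\Lambda),
\]
which is \eqref{eq:E1}.

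\textbf{Main obstacle.} The delicate point is making sure that $d_\Lambda$ lies in a single vertex normal cone, so that the balanced splitting applies there and all rows used in the splitting are simultaneously tight at one vertex. This is exactly what makes the right-hand sides add up to $h_P(d)$. It relies on pointedness of $P$, which follows from positivity, and on the union of the vertex normal cones covering the normal cone of $P$. In the unbounded case one must check the latter carefully; I would argue it via the LP fact that a bounded linear objective over a pointed polyhedron attains its maximum at a vertex. A secondary check is that extreme-ray primitive vectors always appear in a Hilbert basis, which is what yields the facet inequalities in Step 1.
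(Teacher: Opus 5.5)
Your proposal is correct and follows the paper's proof essentially step for step. It uses the same system built from Hilbert bases of the vertex normal cones, and the same balanced splitting of $d_\Lambda$ inside the normal cone of a vertex maximizing $d_\Lambda^\top x$. Your additional arguments fill in points the paper only asserts: facet normals lie on extreme rays and hence in every Hilbert basis, so the system describes $P$; the maximum is attained at a vertex when $P$ is unbounded; and $z_i\ge\lfloor d/2\rfloor$ follows from $|z_1-z_2|\le 1$.
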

\begin{proof}
For each vertex $v$ of $P$, let $H_v$ be a Hilbert basis of $N_P(v)$.
Consider the system:
\begin{equation}\label{sys:ESPsystem}
    \left\{\begin{array}{ll}
       a^\top x\le h_P(a)  &  \text{for all } a \in H_v, \quad \text{ for all } v \text{ vertex of } P\\
        x\ge 0 & 
    \end{array}\right.
\end{equation}
By construction, System~\eqref{sys:ESPsystem} describes $P$ and has integer coefficients.

Let $\Lambda$ be a collection of rows of this system. Choose a vertex $v$ maximizing $d_{\Lambda}^\top x$ over $P$. Then, $d_{\Lambda}$ is an integer vector of $N_P(v)$ that is balanced, hence, there exist two integer vectors $z_1,z_2$ in $N_P(v)$ such that $z_1+z_2=d_{\Lambda}$ and $z_1,z_2\ge\lfloor d_{\Lambda}/2\rfloor$.

Since $H_v$ is a Hilbert basis, both $z_1$ and $z_2$ are nonnegative integer
combinations of its elements. Let $\Lambda_1$ and $\Lambda_2$ be the
corresponding collections of rows. Every such row is tight at $v$, so
\[
b(\Lambda_1)+b(\Lambda_2)
=z_1^\top v+z_2^\top v
=h_P(d_{\Lambda})\le b(\Lambda).
\]
Together with the choice of $z_1$ and $z_2$, this proves that
$\Lambda_1,\Lambda_2$ form an equitable subpartition of $\Lambda$.
\end{proof}

The two results above characterize completely polyhedra strictly contained in the positive orthant that admit an ESP description.

\begin{corollary}
    Let $P$ be full-dimensional polyhedron in $\R_{>0}$, then it admits an ESP description if and only if the normal cone on each of its vertices is balanced.
\end{corollary}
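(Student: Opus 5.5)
The corollary follows by combining Proposition~\ref{prop:necessary} and Theorem~\ref{thm:sufficient}, one for each direction. The only extra work is to check that the hypotheses of each result are met under the standing assumptions, namely that $P$ is full-dimensional and contained in $\R_{>0}$.

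For the ``only if'' direction, suppose $P$ admits an ESP description and let $v$ be any vertex of $P$. Since $P\subseteq\R_{>0}$, every component of $v$ is strictly positive. This is exactly the hypothesis of Proposition~\ref{prop:necessary}, which gives that $N_P(v)$ is balanced. Because $v$ was arbitrary, every vertex normal cone is balanced. If $P$ has no vertices, the condition holds vacuously; under the positivity assumption, $P$ is pointed anyway, as it lies in the orthant.

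For the ``if'' direction, suppose every vertex normal cone of $P$ is balanced. Since $P$ is full-dimensional, Theorem~\ref{thm:sufficient} applies directly and yields a finite ESP description, namely System~\eqref{sys:ESPsystem}.

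The one point needing care is the unbounded case. The proof of Theorem~\ref{thm:sufficient} picks a vertex maximizing $d_\Lambda^\top x$ and uses the support function $h_P$. Both require $d_\Lambda^\top x$ to be bounded over $P$. Each row $a$ of the system lies in some $N_P(v)$, and normal cones are contained in the recession-dual of $P$. Sums of such vectors stay in that convex cone, so $d_\Lambda$ is bounded over $P$. Since $P$ is pointed, the maximum is attained at a vertex. I would note this explicitly, or simply rely on the theorem as stated. No other obstacle is expected.
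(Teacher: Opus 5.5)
Your proposal is correct and follows the paper's approach: the paper obtains the corollary by combining Proposition~\ref{prop:necessary} for the ``only if'' direction (every vertex of $P\subseteq\R_{>0}$ has strictly positive components) with Theorem~\ref{thm:sufficient} for the ``if'' direction. Your additional check that $d_\Lambda^\top x$ is bounded over $P$ is valid, since each $d_\Lambda$ lies in the polar of the recession cone and $P$ is pointed, so the maximum is attained at a vertex; the paper leaves this detail implicit.
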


\section{The Counterexample}\label{sec:construction}
The results in Section~\ref{sec:cones} yield to a building rule for generating box-TDI polyhedra that do not admit a ESP description.

\begin{theorem}\label{thm:construction}
Let $G$ be a non-singular $n\times n$ TE matrix.
Suppose that $C=\cone(\text{rows of }G)$  is not balanced. Let $t$ be a positive integer such that $tG^{-1}$ is integral, and denote by $T$ the uniform vector of components $t$. Let $v\in\Z^n$ satisfy
\[
v_i\ge1+\sum_{j\in [n]}\max\{(tG^{-1})_{ij},0\}\qquad \text{ for all }i\in[n].
\]
Then, $P=\{x:-T\le G(x-v)\le0\}$ is an integer box-TDI polytope contained in $\R_{>0}^n$ that does not admit an ESP description.
\end{theorem}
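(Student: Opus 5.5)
The proof splits into four claims about $P$: it is a polytope, it is integral, it is box-TDI, it lies in $\R^n_{>0}$, and it has no ESP description.

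\textbf{Polytope and vertices.} Set $y=G(x-v)$. Since $G$ is non-singular, this map is an affine bijection. It sends $P$ onto the cube $\{y:-T\le y\le 0\}$. Hence $P$ is a full-dimensional parallelotope, in particular a polytope. Its vertices are $v+G^{-1}y$ with $y\in\{0,-t\}^n$. Writing $y=-t\chi_S$ for $S\subseteq[n]$, the vertex is $w_S=v-(tG^{-1})\chi_S$. This is integral because $v\in\Z^n$ and $tG^{-1}$ is integral, so $P$ is an integer polytope.

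\textbf{Positivity.} For each coordinate $i$ and each $S$ we have
\[
(w_S)_i=v_i-\sum_{j\in S}(tG^{-1})_{ij}\ge v_i-\sum_{j\in[n]}\max\{(tG^{-1})_{ij},0\}\ge 1.
\]
So every vertex lies in $\R^n_{>0}$, and by convexity so does $P$.

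\textbf{Box-TDIness.} $P$ is described by $Mx\le b$ with $M=\begin{pmatrix}G\\-G\end{pmatrix}$. This $M$ is TE: a set of linearly independent rows of $M$ cannot contain both $g_i$ and $-g_i$. Up to signs of rows, it is therefore a set of linearly independent rows of $G$, with the same maximal minors in absolute value. Theorem~\ref{thm:TEboxTDI} then gives that $P$ is box-TDI.

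\textbf{No ESP description.} This step is the only delicate one. Take the vertex $v=w_\emptyset$, which is strictly positive by the bound above. At $v$ the tight constraints are exactly $G(x-v)\le 0$, namely $g_i^\top x\le g_i^\top v$. The constraints $-G(x-v)\le T$ are slack there, since $t>0$. Because $P$ is full-dimensional and simple, with $n$ linearly independent tight inequalities at $v$, the normal cone is
\[
N_P(v)=\cone(g_1,\dots,g_n)=C.
\]
To justify this, note that the tangent cone of $P$ at $v$ is $\{x:G(x-v)\le0\}$, and its polar is generated by the rows of $G$. By hypothesis $C$ is not balanced. Proposition~\ref{prop:necessary}, applied to the vertex $v$ with positive components, then shows that $P$ admits no ESP description.

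The main care point is identifying $N_P(v)$ with $C$ exactly, with no extra generators coming from the opposite facets. This holds because those facets are not tight at $v$.
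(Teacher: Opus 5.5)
Your proof is correct and follows the same route as the paper. The vertices are $v-tG^{-1}\chi_S$, which gives integrality and positivity; Theorem~\ref{thm:TEboxTDI} gives box-TDIness; and Proposition~\ref{prop:necessary} is applied at the vertex $v$, where $N_P(v)=C$. You also spell out two points the paper asserts without justification: why the stacked matrix $\begin{pmatrix}G\\-G\end{pmatrix}$ is TE, and why the facets of $-G(x-v)\le T$, being slack at $v$, add no generators to $N_P(v)$.
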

\begin{proof}
The constraint matrix of $P$ is totally equimodular, hence $P$ is box-TDI by~\ref{thm:TEboxTDI}. Since $G$ is nonsingular, the polytope is bounded and its vertices are $ v-tG^{-1}z$ for all $\{0,1\}$ vectors $z$. In particular, $x\ge 1$ for all $x$ in $P$.
At the vertex $v$, we have $N_{P}(v)=C$. Since $C$ is not balanced, $P$ does not admit a ESP description.
\end{proof}

Theorem~\ref{thm:construction} does not fully answer our investigation, since we must still prove that there exists at least one matrix satisfying the hypotheses of the theorem.
The smallest matrix we were able to find is:
\[
\Gamma=
\begin{pmatrix}
1&0&0&1\\
0&1&0&1\\
0&0&1&1\\
1&1&1&1
\end{pmatrix}.\]
$\Gamma$ is a well-known minimally non-TU matrix, in particular, it is TE.
We proceed by showing that the cone generated by the rows of $\Gamma$ is not balanced.
\begin{proposition}\label{prp:42}
    The cone $C $ generated by the rows of $\Gamma$, $\gamma_1,\dots\gamma_4$, is not balanced.
\end{proposition}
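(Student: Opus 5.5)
The plan is to exhibit one explicit integer vector $z\in C$ that cannot be split, and to verify this by a finite case analysis. Since $\Gamma$ is nonsingular, $C$ is simplicial, and every $z\in\R^4$ has a unique expression $z=\sum_i\lambda_i\gamma_i$. Solving the system gives
\[
\lambda_4=\tfrac12(z_1+z_2+z_3-z_4),\qquad \lambda_i=z_i-\lambda_4 \quad (i=1,2,3).
\]
Hence an integer $z$ lies in $C$ if and only if
\[
0\le \tfrac12(z_1+z_2+z_3-z_4)\le \min\{z_1,z_2,z_3\}.
\]
This gives an explicit membership test, and I will use it for every candidate summand.

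The integer points of $C$ are the vectors with $\lambda\in\frac12\Z^4_{\ge0}$ and $\lambda_i-\lambda_4\in\Z$. The only nonzero point of the half-open fundamental parallelepiped is $h=(1,1,1,2)=\frac12\sum_i\gamma_i$. So a Hilbert basis is $H=\{\gamma_1,\gamma_2,\gamma_3,\gamma_4,h\}$. By the remark following the definition of balanced cones, it suffices to test the $0,1$ combinations of $H$, which gives at most $2^5$ vectors.

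For each such $z$, the candidate splittings are very constrained. The condition $z^1+z^2=z$ with $|z^1-z^2|\le1$ forces $\{z^1_j,z^2_j\}=\{\lfloor z_j/2\rfloor,\lceil z_j/2\rceil\}$ in every coordinate. So there are at most $2^{\#\{j: z_j\text{ odd}\}}\le16$ splittings, and for each pair I apply the membership inequality to both $z^1$ and $z^2$. Several entries can be dispatched by hand:
\begin{itemize}
\item each single element of $H$ splits as $0$ plus itself;
\item $\gamma_1+\gamma_4=(2,1,1,2)$ splits as $\gamma_4+\gamma_1$;
\item $\gamma_1+\gamma_2+\gamma_3$ splits as $\gamma_1+(\gamma_2+\gamma_3)$;
\item $\sum_i\gamma_i=(2,2,2,4)=h+h$;
\item $\gamma_1+h$ splits as $\gamma_1+h$;
\item $\sum_i\gamma_i+h$ splits as $(1,1,1,3)+(2,2,2,3)$.
\end{itemize}
By the symmetry of $\Gamma$ under permutations of the first three coordinates, only a handful of orbit representatives remain, such as $\gamma_1+\gamma_2+\gamma_4+h$ and $\gamma_1+\gamma_2+h$. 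I expect the failing vector to be among these, with the mixed parity of its coordinates making every forced halving violate $\lambda_4\le\min\{z_1,z_2,z_3\}$ for one of the two parts.

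The argument will then consist of naming that vector $z$, listing its (at most $16$, up to symmetry fewer) forced splittings, and showing that each violates the inequality for $z^1$ or $z^2$. For instance, when $z_1+z_2+z_3-z_4$ is odd for one part, that part's $\lambda_4$ is a half-integer; this forces some $\lambda_i=z_i-\lambda_4$ to be negative whenever that $z_i$ is the smaller half.

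The main obstacle is locating the right $z$. My hand checks above show that many natural candidates are in fact balanced. If the $0,1$ combinations of $H$ that I have not yet checked by hand all turned out balanced, the remark reducing balancedness to these combinations would be contradicted, so some such combination must fail, provided the proposition is true. I would therefore first carry out the full (small) enumeration over the $32$ subsets of $H$ modulo the $S_3$ symmetry, to identify the witness, and then present only the exhaustive case check for that witness in the proof.
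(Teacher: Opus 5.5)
Your preliminary work is correct. This includes the membership test $0\le\frac12(z_1+z_2+z_3-z_4)\le\min\{z_1,z_2,z_3\}$, the Hilbert basis $H=\{\gamma_1,\dots,\gamma_4,h\}$ with $h=(1,1,1,2)$, the reduction to $0,1$ combinations of $H$, and your hand splittings of $\gamma_1+\gamma_4$, $\gamma_1+\gamma_2+\gamma_3$, $\sum_i\gamma_i$, $\gamma_1+h$ and $\sum_i\gamma_i+h$. But the proposal never names a witness or carries out a check, so it does not yet prove the statement. Its one false step also discards exactly the vector that works. You assert that every element of $H$ splits as $0$ plus itself. For $h$ this is not a balanced split, since $|h-0|=(1,1,1,2)$ has fourth entry $2>1$. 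In fact $h$ is the witness, and up to the $S_3$ symmetry it is the only $0,1$ combination of $H$ that fails. The representatives you guessed are balanced: $\gamma_1+\gamma_2+h=(\gamma_1+\gamma_2)+h$ and $\gamma_1+\gamma_2+\gamma_4+h=h+(\gamma_1+\gamma_2+\gamma_4)$. So your enumeration, with $h$ already marked as balanced, would come back empty. The fallback ``some combination must fail, provided the proposition is true'' is circular and cannot replace the verification.

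The missing check is short and is exactly the paper's argument. In any split $h=z^1+z^2$ with $|z^1-z^2|\le 1$, the fourth coordinate forces $z^1_4=z^2_4=1$, and the first three force $z^k_j\in\{0,1\}$ for $j\le 3$. For an integer point of $C$ with fourth coordinate $1$ and first three coordinates in $\{0,1\}$, your membership test gives $\lambda_4=\frac12(z_1+z_2+z_3-1)$. This forces $z_1+z_2+z_3=1$ or $z=(1,1,1,1)$, so the point is one of $\gamma_1,\dots,\gamma_4$; equivalently, $C\cap\{x_4=1\}=\operatorname{conv}\{\gamma_1,\dots,\gamma_4\}$. Every $\gamma_i$ has odd $z_1+z_2+z_3$, so a sum of two of them has an even value there, whereas $h$ has value $3$. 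Hence $h$ admits no balanced decomposition, and $C$ is not balanced.
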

\begin{proof}
    Consider the vector $z=(1,1,1,2)$, this vector belong to $C$ because $z=(\gamma_1+\gamma_2+\gamma_3+\gamma_4)/2$. 
    In any balanced decomposition $z=x_1+x_2$, both $x_1$ and $x_2$ have last component equal to $1$ and first three components in $\{0,1\}$. Since $C\cap\{x_4=1\}=\operatorname{conv}\{\gamma_1,\dots,\gamma_4\}$, the only such integer points are $\gamma_1,\dots,\gamma_4$.
However, $z$ cannot be the sum of two rows of $\Gamma$: each row has an odd sum of its first three components, and that sum for $z$ is $3$. Thus $C$ is not balanced.
\end{proof}
\section{Conclusions}\label{sec:conc}
In this paper we answered negatively to a question raised in~\cite{lancini:2026}: there exist box-TDI polyhedra that do not admit an ESP description. In order to do so, we translated the ESP property in to a geometrical property on the normal cones of a polyhedron, the balancedness.
We both gave an example of such a polyhedron and a tool to generate an infinite family of such examples. In particular, our example can be used to build polyhedra of every dimension $\ge 4$ that are box-TDI and do not admit an ESP description.
Recognizing box-TDI polyhedra is co-NP-complete \cite{chervet:2023:hard}. This motivates the following
conjecture, which we leave open.
\begin{conjecture}
    Testing whether a cone is balanced is Co-NP-complete.
\end{conjecture}
\subsection*{Acknowledgment of AI usage}
 The mathematical results and proofs in this paper were obtained independently by the authors, the counterexample in Proposition~\ref{prp:42} was obtained by an AI. Generative AI was also used to assist with the language editing. All mathematical claims, citations, and references were independently checked by the authors, who take full responsibility for the content of the paper.

\bibliographystyle{plain}
\bibliography{ref}
\end{document}